\newtheorem*{theorem}{Theorem}
\newtheorem*{proposition}{Proposition}
\newcommand{\res}[1]{\textit{Resource#1}\xspace}
\newcommand{\algoAvgCost}{ AVERAGE-COST-ALLOCATION\xspace}   
\newcommand{\algoRandInit}{ RAND-INIT-ALLOCATION\xspace}   
\newcommand{\allocate}{ ALLOCATE\xspace}
\newcommand{\nonSplitCost}{ NON-SPLIT-COST\xspace }
\newcommand{\splitCost}{ SPLIT-COST\xspace }
\newcommand{\CAL}[1]{\mathcal{#1}\xspace}
\newlength\mylen
\newcommand{\removelatexerror}{\let\@latex@error\@gobble}
\begin{document}
\title{Allocation of Heterogeneous Resources of an IoT Device to Flexible Services}

\author{\IEEEauthorblockN{Vangelis Angelakis, Ioannis Avgouleas, Nikolaos Pappas, Emma Fitzgerald, and Di Yuan}
		
	\thanks{V. Angelakis, I. Avgouleas, N. Pappas, D. Yuan are with the Department of Science	and Technology, Link{\"o}ping University, SE-60174 Norrk{\"o}ping, Sweden.
	(emails: \{vangelis.angelakis, ioannis.avgouleas, nikolaos.pappas,  di.yuan\}@liu.se). D. Yuan is also visiting professor at the Institute for Systems Research, University of Maryland, College Park, MD 20742, USA.} 
	\thanks{ E. Fitzgerald is with the Department of Electrical and Information Technology, Lund University, SE-221 00 Lund, Sweden. 
	email: emma.fitzgerald@eit.lth.se}
	\thanks{Early results of this work have been presented in the IEEE INFOCOM 2015 Student Workshop \cite{SIA_INFOCOM}.}
	\thanks{This work was supported in part by the Excellence Center at Link{\"o}ping-Lund in Information Technology. Furthermore, the research leading to these results has also received funding from the European Union's Seventh Framework Programme (FP7/2007-2013) under grant agreements n° 324515 (MESH-WISE), 612316 (SOrBet), and 609094 (RERUM).}
}

% make the title area
\maketitle
\IEEEpeerreviewmaketitle
  
\begin{abstract}
Internet of Things (IoT) devices can be equipped with multiple heterogeneous network interfaces.
An overwhelmingly large amount of services may demand some or all of these interfaces' available resources. 
Herein, we present a precise mathematical formulation of assigning services to interfaces with heterogeneous resources in one or more rounds. 
For reasonable instance sizes, the presented formulation produces optimal solutions for this computationally hard problem. We prove the NP-Completeness of the problem and develop two algorithms to approximate the optimal solution for big instance sizes.
The first algorithm allocates the most demanding service requirements first,  considering the average cost of interfaces resources. The second one calculates the demanding resource shares and allocates the most demanding of them first by choosing randomly among equally demanding shares.
Finally, we provide simulation results giving insight into services splitting over different interfaces for both cases.
\end{abstract}
\begin{IEEEkeywords}
	Resource Management, Optimization, Internet of Things, Network interfaces, Scheduling Algorithms, Integer Linear programming.
\end{IEEEkeywords}

\section{Introduction}

\IEEEPARstart{O}{} {ver} the last few years we have witnessed the technological revolution represented by the Internet of Things (IoT) \cite{Atzori_2010}. A massive number of devices with different capabilities such as sensors, actuators, smart objects, and servers can interconnect and give rise to the development of compelling services and applications. Each IoT device can be perceived as an edge-node of a cyber-physical ecosystem with the ability to dynamically cooperate and make its resources available in order to reach a complex goal i.e., the execution of one or more tasks assigned to the network \cite{Bassi2013ett}.

Although available resources (exchangeable energy, processing power, storage capabilities etc.) are often limited, IoT devices may be called on to provide a large variety of services. It is evident that an efficient allocation of these IoT resources would improve the performance of this network. Optimal resource allocation for IoT is not trivial considering its distributed and heterogeneous nature. 

In this paper we assume that an IoT device consists of multiple network interfaces of heterogeneous technologies. We consider that each of them has a set of non-interchangeable resources which are in demand by a given set of services. Considering that the services are flexible in that they can be split between more than one interface, we model the assignment of them to the interfaces (see \figurename{\ref{fig:IoT_example}}). We call this problem \textit{Service-to-Interface Assignment (SIA)} and characterize its complexity. We also provide fast algorithms which we compare with the optimal solution.

%\vspace{-5pt}
\begin{figure}[!t]
	\vspace{-25pt}
	\begin{center}
	\includegraphics[width=1.1\columnwidth]{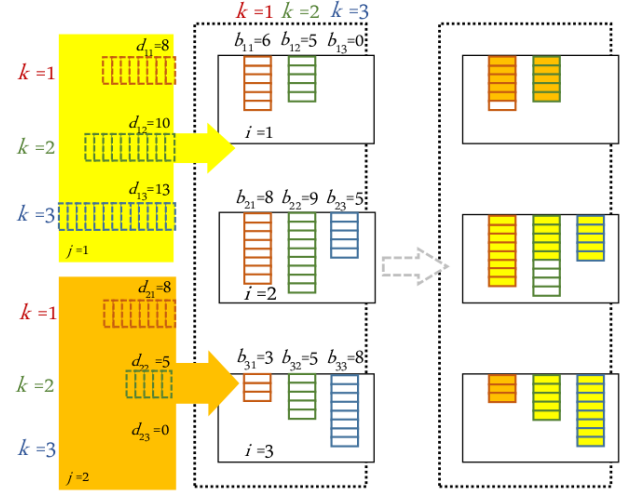}
	\end{center}
	\vspace{-20pt}
	\caption{An instance of an allocation on an IoT device with three interfaces ($i={1,2,3}$) offering three different resources ($k={1,2,3},$ visualized with red, green, and blue in the center of the figure). \textit{Left:} Before the allocation, two services demand resources; the first (yellow), and the second (orange) services demand $d_{1} = (8, 10, 13)$ and $d_{2} = (8, 5, 0) $ of these resources respectively. \textit{Right:} After the allocation, two service splits have happened: the demands of the first (yellow) service are served by the second and the third interface, while the second (orange) service's resources are split into two interfaces; the first and the third one. Note the leftover capacities of the allocation: one unit of the first (red) resource of the first interface and four units of the second (green) resource of the second interface. }
	\label{fig:IoT_example}
\end{figure}

\subsection{Related Work}
Heterogeneity of networking systems with multiple and different interfaces has been extensively studied. 
Existing applications in 4G/Wi-Fi heterogeneous wireless networks, related to this work, are presented in \cite{Cesana_VTC_2015}. Therein, the authors present methods of offloading tasks such as object recognition in a series of images, features matching, and computing descriptors to accelerate the required tasks and avoiding traffic overload. Similarly, the authors of \cite{Capone_INFOCOM_2014} propose a framework to disseminate content such as social network updates, notifications, and news feeds.
A framework for reliable networking, considering also Free Space Optic (FSO) connections and Optical Fibers, is presented in \cite{Pappas_JSAC}.

Resource Allocation (RA) has been extensively studied in wireless networks. In \cite{Tassiulas_NOW}, models that capture the cross-layer interaction from the physical to transport layers in wireless network architectures such as cellular, ad-hoc and sensor networks as well as hybrid topologies are presented. These models consider the time varying nature of arbitrary networks with different traffic forwarding modes, including datagrams and virtual circuits. Information and Communications Technology (ICT) systems may allocate resources according to some performance metrics such as efficiency and fairness \cite{Jain_idx_1998, Chiang_Multi-Rerource_Allocation_Tradeoffs}, or based on traffic type \cite{Radio_Res_Alloc_JSAC_2012}. Such objectives can often be conflicting and it may be very hard to simultaneously achieve them to a satisfactory degree \cite{Chiang_Multi-Rerource_Allocation_Tradeoffs}.
Ismail \textit{et al.} have also published a work where constant and variable bit rate services with priorities are considered \cite{Res_Alloc_Ismail_JSAC_2012}, such that each network gives a higher priority on its resources to its own subscribers rather than other users.

An example IoT application with strict demand constraints over resources to serve multiple users is presented in \cite{Res_Alloc_Delay_Constr_JSAC_2014}. The proposed framework is also appropriate to many RA settings because it provides ways to coordinate users with delay preferences using only limited information.
A distributed protocol based on consensus is proposed in [17] to solve the RA problem and management in IoT heterogeneous networks. This protocol is robust to link or node failures in dynamic scenarios.
In that paper, an IoT scenario is presented with nodes pertaining to a given IoT task by adjusting their task frequency and buffer occupancy.

Variable channel conditions and users' demands tie RA in with Quality of Service (QoS) requirements and guarantees. The research has offered different angles on tackling RA. For instance, Tan \textit{et al.} \cite{Tan_Zhu_Fei_Ge_Xiong_Utility_Max} present methods and algorithms to maximize network utility for three different QoS requirement scenarios. The traditional QoS attributes such as throughput, delay and jitter are not necessarily suited to IoT, but can still be relevant depending on the application \cite{Fagih_2013,He_2014,Zhao_2011}. Thus, the QoS in IoT is still not well-defined, mainly because an IoT service cannot be defined as the simple acquisition and processing of information and the decision making process in identification and communication \cite{Guo_2012}. In IoT, more QoS attributes such as \textit{information accuracy, privacy}, and \textit{timeliness} which rely fundamentally on the \textit{reliability of the network connectivity, availability of device energy}, and \textit{overall resources} may considerably matter \cite{Age_of_Info_ICC_2015, Age_of_Info_ISIT_2013}.

Additionally, some IoT services are required to be reconfigurable and composable. Li \textit{et al.} \cite{QoS_Scheduling_II_2014} propose a three-layer QoS scheduling model for QoS-aware IoT service. At the application layer, the QoS schedule scheme explores optimal service composition by using the knowledge provided by each service. Contemporary applications of RA can be found in \cite{CAO_2_SENSORS_JOURNAL}, where the impact of inter-user interference in Wireless Body Sensor Networks (WBSNs) is studied. 

The heterogeneity of IoT devices and the resources they provide to developing IoT applications are at the core of our work. Recent architectural frameworks (see e.g., \cite{RERUM01,RERUM02} and the references therein) call for de-verticalization of solutions, with applications being developed, independently of the end devices, which may be anything from a sensor to the latest smartphone.

We focus on IoT networking devices having multiple, different interfaces, each of which has access to a collection of finite heterogeneous resources such as downlink data rate, buffer space, CPU interrupts, and so forth. We also consider that each service is characterized by a set of demands that can be served by the resources available on one device's interfaces. 
Assuming a middleware has already assigned a service onto a given device, in this work we address the problem of flexibly mapping the service resource demands onto the interfaces of that device. 
The novel notion of flexibility of the services lies in the assumption that a demand may be served by more than one of the available interfaces, in case the available resource does not suffice, or the cost of utilizing resources over different physical interfaces proves beneficial.
From the point of view of a service, the mapping of resources from the device's interfaces to its demands could be viewed as a virtual serving interface.

\subsection{Contribution and Paper Layout}
In this work, we present a mixed-integer linear programming (MILP) formulation of the problem of assigning services to interfaces with heterogeneous resources. 
The goal is to minimize the total cost of utilizing the interfaces' resources, while satisfying the services' requirements. We consider the total cost as the sum of the cost of utilizing each resource unit along with the activation of each interface being engaged to serve a service.
An example of an instance of this assignment problem can be seen in \figurename~\ref{fig:IoT_example}. 
We also prove the NP-Completeness of the problem. For reasonable instance sizes, the presented formulation produces optimal solutions. 
We present two cases; (i) when the interfaces have enough available resources to serve the demands in one round, and (ii) when the interfaces need to serve the demands in multiple rounds i.e., serving a partial amount of the demands in each round. 
We develop two algorithms to approximate the optimal solution for large instance sizes. The first algorithm allocates the most demanding service requirements first, taking into consideration the average cost of utilizing interface resources, while the second one calculates the demanding resource shares and allocates the most demanding of them first by choosing randomly among equally demanding shares.
Finally, we provide simulation results giving insight into service splitting over different interfaces for both cases.

The rest of the paper is organized as follows. In Section \ref{sec:SystemModel}, we describe the MILP formulation of assigning services to interfaces with heterogeneous resources, where the allocation takes place in one round i.e., the interfaces capacities can accommodate the whole resource demands and the problem is feasible. Herein lies the NP-Completeness proof of the problem. 
In Section \ref{sec:Algo_Solution}, we analyze the algorithms we derived to approximate the optimal solution to the problem. 
Section \ref{sec:SchedulingSection} provides the extension of the problem to more than one round. Additionally, we prove a proposition for the number of rounds required to ensure feasibility. In Section \ref{sec:Results}, we present the results of our experiments for each of the aforementioned cases and for different configurations of services and interfaces' parameters. Finally, Section \ref{sec:Conclusion} concludes the paper.

\section{System Model}\label{sec:SystemModel}

\begin{table}[!t]
	\renewcommand{\arraystretch}{1.2}
	\caption{Notation}
	\centering
	\begin{tabular}{ | c | l | }
		\hline
		\bfseries Symbol & \bfseries Description\\
		\hline
		$\mathcal{K}$ & the set of $K$ resources\\
		$\mathcal{I}$ & the set of $I$ interfaces\\
		$\mathcal{J}$ & the set of $J$ services\\
		$x_{ijk}$     & amount of resource $k$ on interface $i$ used by service $j$\\
		
		$c_{ik}$ & the unit utilization cost of resource $k$ on interface $i$\\ 
		
		$F_{i}$ & the activation cost of interface $i$\\
		
		$A_{ij}$ & binary indicator of $i$-th interface's activation for service $j$\\
		
		$d_{jk}$ & $j$-th service's demand for resource $k$\\
		
		$b_{ik}$ & $k$-th resource capacity of interface $i$ \\		
		
		$a_{ijk}$ & the overhead for utilizing resource $k$ on interface $i$ \\  & for service $j$\\
		\hline
	\end{tabular} \label{table:notation}	
\end{table}

We consider a set of interfaces $\mathcal{I} = \{1,\dots,I\}$. The interfaces are characterized by a set $\mathcal{K} = \{1,\dots,K\}$ of resources associated with them (for example {CPU cycles, downlink data rate, buffer size}). We assume that each service $j \in \mathcal{J} = \{1,\dots,J\}$ is associated with a $K$-dimensioned demand integer vector $\mathbf{d_j}$. Likewise, each interface has a $K$-dimensioned capacity (resource availability) integer vector $\mathbf{b_i}$. 

We consider the case that services are flexible in the sense that they can be realized by splitting their demands on multiple interfaces. To model the burden that is imposed on the operating system of the device to handle such splits, we introduce a fixed-cost factor: the activation cost per interface. This is employed as a parameter to gauge the number of splits. Aside from this fixed cost, we also consider a utilization cost per unit of resource used on an interface.

Finally, to state the problem, we make the assumption that the given assignment is feasible i.e., the interface capacities are enough to serve the requested demands, which can be expressed as \[ \sum\limits_{\mathnormal{j} \in \mathcal{J}} \sum\limits_{\mathnormal{k} \in \mathcal{K}} d_{jk} \leq \sum\limits_{\mathnormal{i} \in \mathcal{I}} \sum\limits_{\mathnormal{k} \in \mathcal{K}} b_{ik}, ~\forall \mathnormal{k} \in \mathcal{K}. \]. In Section \ref{sec:SchedulingSection}, this assumption is removed for problem extension.

Our goal is to serve all demands by assigning them to the physical interfaces, minimizing the total cost of using them, namely the total resource utilization and activation cost. We call this the \textit{Service-to-Interface Assignment (SIA)} problem. 

In the model that follows, we use the variable $x_{ijk}$  for the amount of the $k$-th resource of the $i$-th interface utilized by service $j$. 
We consider these values to be integer such as the ones in the demands vectors. We denote by $c_{ik}$ the per-unit cost to utilize resource $k$ on interface $i$. 

The activation cost of interface $i$ is $F_{i}$ and the auxiliary variable $A_{ij}$ becomes one if and only if there is at least one resource utilizing interface $i$ for service $j$.

We also assume that each service $j$ incurs an overhead on the resource it utilizes, which may vary by interface in order to capture MAC and PHY layer practical considerations; this is denoted $a_{ijk}$. 
Thus, our model amounts to:

\begin{equation}
\begin{aligned}
& \underset{}{\text{min.}}
& & \sum\limits_{\mathnormal{k} \in \mathcal{K}} \sum\limits_{\mathnormal{i} \in \mathcal{I}} \mathnormal{c}_{ik} \sum\limits_{\mathnormal{j} \in \mathcal{J}} \mathnormal{x}_{ijk} + \sum\limits_{\mathnormal{i} \in \mathcal{I}} \sum\limits_{\mathnormal{j} \in \mathcal{J}} \mathnormal{F}_{i}\mathnormal{A}_{ij},
\end{aligned}\label{eq:minOptFormulation} 
\end{equation}\vspace{-3mm}
\begin{equation}
\begin{aligned}
& \text{s.t.}
& & \sum\limits_{\mathnormal{i} \in \mathcal{I}} x_{ijk} = d_{jk}, ~\forall \mathnormal{j} \in \mathcal{J}, ~\forall \mathnormal{k} \in \mathcal{K},
\end{aligned}\label{eq:demandsConstraints} 
\end{equation}\vspace{-3mm}
\begin{equation}
\begin{aligned}
& & & \sum\limits_{\mathnormal{j} \in \mathcal{J}} (1+a_{ijk})x_{ijk} \leq b_{ik}, ~\forall \mathnormal{i} \in \mathcal{I}, ~\forall \mathnormal{k} \in \mathcal{K},
\end{aligned}\label{eq:capacityConstraints}
\end{equation}\vspace{-3mm}
\begin{equation}
\begin{aligned}
& & & x_{ijk} \geq 0, ~\forall \mathnormal{i} \in \mathcal{I}, ~\forall \mathnormal{j} \in \mathcal{J}, ~\forall \mathnormal{k} \in \mathcal{K},
\end{aligned}\label{eq:nonNegativityConstraints}
\end{equation}\vspace{-3mm}
\begin{equation}
\begin{aligned}
& & & A_{ij} = \mathbbm{1} \Big(  \sum\limits_{\mathnormal{k} \in \mathcal{K}}  x_{ijk} \Big) , ~\forall \mathnormal{i} \in \mathcal{I}, ~\forall \mathnormal{j} \in \mathcal{J},\label{eq:ACT} 
\end{aligned}
\end{equation}

where the objective of \eqref{eq:minOptFormulation} is to minimize the total cost of two terms: the first aims to capture the total cost incurred by the utilization of the resources over heterogeneous interfaces, while the second term captures the cost introduced by splitting the service over multiple interfaces, since with each additional interface utilized the overall cost is encumbered by another activation cost $F$-term. The set of constraints in \eqref{eq:demandsConstraints} ensures that all services demands are met, while the constraints of \eqref{eq:capacityConstraints} ensure that the service allocation will be  performed on interfaces with available resources. 
In \eqref{eq:ACT} the $\mathbbm{1}(.)$ symbol denotes the characteristic function becoming one if the argument is positive, zero otherwise, thus, $A_{ij}$ becomes one if one resource of service $j$ is served by interface $i$. 
A summary of the notation we use can be found in \tablename{~\ref{table:notation}}.

\begin{theorem} The SIA is NP-Complete. \end{theorem}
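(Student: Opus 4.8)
The plan is to first recast SIA as a decision problem --- given the instance data together with a budget $T$, decide whether there exists a feasible assignment $\{x_{ijk}\}$ whose objective value in \eqref{eq:minOptFormulation} is at most $T$ --- and then establish the two halves of NP-completeness separately. Membership in NP is immediate: a candidate certificate is the table of integers $\{x_{ijk}\}$, whose encoding is polynomial in the input size because each entry is bounded by the corresponding demand $d_{jk}$ via \eqref{eq:demandsConstraints}. Given such a certificate one recovers each $A_{ij}$ from \eqref{eq:ACT}, verifies the demand, capacity, and non-negativity constraints \eqref{eq:demandsConstraints}--\eqref{eq:nonNegativityConstraints}, and evaluates the objective, all in polynomial time.

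For hardness I would reduce from \textsc{Partition}: given positive integers $a_1,\dots,a_n$ with $\sum_j a_j = 2S$, decide whether the index set splits into two parts of equal sum $S$. The SIA instance I build uses a single resource ($K=1$), two interfaces ($I=2$) each with capacity $b_{i1}=S$, and $n$ services ($J=n$) with demands $d_{j1}=a_j$. I switch off the utilization term by setting $c_{i1}=0$ and the overhead $a_{ij1}=0$, set every activation cost to $F_i=1$, and take the budget $T=n$. Note that the feasibility hypothesis of the model holds with equality here, since total demand and total capacity both equal $2S$, so the construction stays inside the class of instances the model addresses.

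With these choices the objective collapses to the pure activation count $\sum_{i,j}A_{ij}$, and the crux is to show this budget forces an exact bipartition. Because every service has strictly positive demand, each must be touched by at least one interface, so $\sum_i A_{ij}\ge 1$ and the activation count is always at least $n$; meeting the budget $T=n$ therefore forces equality, meaning no service is split and each is served by a single interface. The two interfaces then induce a partition of $\{a_j\}$, and the capacity constraints \eqref{eq:capacityConstraints} say each part sums to at most $S$; since the parts together sum to $2S$, each sums to exactly $S$. Conversely, any equal partition yields an unsplit assignment of cost exactly $n$. Hence the \textsc{Partition} instance is a YES-instance if and only if the constructed SIA instance admits a solution of cost at most $n$, and since the construction is clearly polynomial, NP-completeness follows.

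The step I expect to be delicate is precisely this forcing argument in the backward direction. The subtlety is that the capacities \emph{alone} never rule out splitting: because total demand equals total capacity exactly, one can always fill both interfaces to $S$ by splitting a single item, so every instance --- even a \textsc{Partition}-NO instance --- is feasible in the plain sense. It is therefore the activation budget, not the capacity constraints, that must do the work of forbidding splits, and one has to argue this carefully before the tight total-capacity argument pins each interface to the exact half-sum. I would also remark that this reduction establishes NP-completeness in the ordinary sense; replacing \textsc{Partition} by \textsc{3-Partition} or \textsc{Bin Packing} in the same template would upgrade it to strong NP-completeness if desired.
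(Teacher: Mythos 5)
Your reduction is essentially identical to the paper's: the same \textsc{Partition} construction with $K=1$, $I=2$, capacities equal to the half-sum, zero utilization cost and overhead, unit activation cost, and budget equal to the number of services, with the same forcing argument that meeting the budget forbids splits and the tight capacities then pin each interface to the exact half-sum. Your write-up is in fact slightly more complete, since you explicitly verify membership in NP, a step the paper leaves implicit.
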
\label{thm:SIA}

\begin{proof}
	The \textit{Partition Problem (PP)} amounts to determining if a set of $\mathnormal{N}$ integers, of sum $\mathnormal{S}$ can be partitioned into two subsets, each having a sum of  $\mathnormal{S}/2$. The PP is a well-known NP-Complete problem \cite{Karp}. We base the proof on the construction of an instance of the problem from any instance of the Partition Problem, as follows. 
	
	Assume that we have only one resource type on the interfaces available $(K=1)$. Let each element in the set of the PP be a service of the SIA problem instance and the value of each element be the resource demand $d_j$ of the corresponding $j$-th service. Additionally, let there be just two interfaces $ (I=2) $, each with resource availability $ b_i=\mathnormal{S}/2, \forall i\in \{1,2\} $. 
	We set  the overhead coefficients to zero $\alpha_{ij}=0, \forall i  \in \{1, 2\}, \forall j \in \{ 1,\dotsc,J \} $ and the utilization cost to zero likewise $ c_{ij} = 0 $, while we fix the interface activation cost to one: $ F_{ij} = 1 $.
	
	The constructed SIA instance is feasible, because (i) by construction the total resource availability on the two interfaces suffices to serve the total demand and (ii) splitting service demand on more than one interfaces is allowed.
	
	Consider a solution of the constructed SIA instance where no splits occur. If such a solution exists, then each service is assigned to one interface and by \eqref{eq:minOptFormulation} the cost will be equal to $J$. Furthermore, since any service split in two interfaces gives a cost of two, if splits exist in a solution, the cost will be at least $J+1$. Hence, by the construction of the instance, it becomes obvious that no value lower  than $J$ can be achieved. The recognition version of the SIA instance is to answer whether or not there is a solution for which the cost is at most some value, in our case $J$.
	
	In any solution of the SIA the service demand assigned on each interface will be $\mathnormal{S}/2$.  If there is no split of services in a solution, then their assignment to the two interfaces is a partition of the integers in the PP with equal sum. 
	Hence, if the answer to the original PP instance is positive and we map the services to the elements of the solution subsets, then by assigning the services to the two interfaces, no splits will exist and the cost will be $J$. 
	mapping to the elements of the solution subsets
	Thus, the answer to the recognition version of the SIA instance is yes. Conversely, if the answer to the SIA is yes, then there cannot be a split service so the assignment of services to the two interfaces is a valid PP solution. Therefore, solving the constructed SIA instance is equivalent to solving an arbitrary PP instance.
\end{proof}

An example allocation has been depicted in \figurename~\ref{fig:IoT_example}. Two services demand three different resources of an IoT device's interfaces. The network interfaces offer three different resources with enough capacities in total to serve the resource demands of both services in one round. 
Note that it is possible that each resource type is not available in each interface and that a service may not demand all resource types.

Although the search space for the \textit{SIA} problem in practical implementations will be small, meaning an MILP solver would be able to provide the optimal answer in milliseconds, the assumption of the functionality of such a solver on a constrained IoT device points to the need for a fast sub-optimal algorithm. To this end, we have devised a solution with two variants that are described in Section \ref{sec:Algo_Solution}.

\section{Algorithmic Solution}\label{sec:Algo_Solution}
We have devised two algorithms which approximate the optimal solution of \textit{SIA}.
Both algorithms assign service demands to interfaces' capacities using interfaces' utilization and activation costs $c$ and $F$ respectively. They differ in the way they choose which resource demand to serve first.

The main idea of the first algorithm, which we call \algoRandInit, is to first serve the services that demand the highest resource shares. Serving the highest demands first will employ the largest amount of the least expensive interface capacities. As a result, it is highly probable that the total cost will be minimized. The algorithm then proceeds serving the demands with lower resource shares and so on. The serving order among equal resource shares is chosen randomly.  

The first four lines of the algorithm (see Algorithm \ref{alg:randInit}) calculate the resource shares of every demand. They do so by dividing each demand by the maximum resource demand of its type. The result is a normalized demands array {$\bf{d^{'}}$}. Hence, $0 \leq d^{'}_{jk} \leq 1, (~\forall \mathnormal{j} \in \mathcal{J}), (~\forall \mathnormal{k} \in \mathcal{K}) $.

Procedure RANDOM-INIT-EQUAL-SHARES in Line 5 of the algorithm takes $\bf{d^{'}}$ as input and finds which services demand equal resource shares. For these services it randomly chooses the order by which they will be served. The result is a vector ($\bf{d_s}$) with the order by which the demands will be served. 

Since there is now a data structure ($\bf{d_{s}}$) with the ordering of the demands,\algoRandInit can proceed with each demand of $\bf{d_{s}}$ beginning from the first one, and allocate it to the available interface resources. 
The initialization of two auxiliary variables follows in Lines $6-7$. The total cost is saved into variable $totalCost$. Initially, the $2D$ array $A$ consists of $I*J$ zero elements. Later, element $A_{ij}$ will be set to one if service $j$ has activated interface $i$.

The block of Lines $8-23$ allocates each service demand $\bf{d_{s}}$ to the interfaces, beginning by serving the most demanding one first. 
In the beginning, the algorithm locates which service $j$ requested resource $k$ (Line $9$) mapping the vector $\bf{d_{s}}$ to the $2D$ coordinates of $d_{jk}$. Then, it searches for the two interfaces with the lowest utilization costs per unit. The corresponding indices are saved into variables $i'$ and $i''$ (Lines $10$ and $11$). 
Next, it tries to allocate the demand to the most inexpensive --- by utilization cost per unit --- interface. 
If the interface capacities are enough (Line $12$), then the allocation happens by the\allocate procedure: the capacities are decreased appropriately by the requested demand, the $totalCost$ is updated, and the binary variable $A(i',j)$ is set to one to denote that service $j$ has activated interface $i'$. 
If the lowest-cost interface ($i'$) cannot serve $\bf{d_{s}}$, then the algorithm tries to allocate it to the interface with the next lowest utilization cost per unit ($i^{''}$), following the appropriate allocation steps (Lines $14-15$).

In the event both of these attempts fail, \algoRandInit investigates the possibility of splitting the requested service demands into two or more interfaces (Lines $16-22$).
Two costs are calculated in this case. \nonSplitCost calculates the cost of allocating the demands to the interface that can serve them fully with the lowest possible cost (Line $17$). 
\splitCost calculates the cost of splitting the demands among interfaces with available resources in descending order of cost (Line $18$). 
In case that the two lowest-cost interfaces can only partially serve the demand, the third most inexpensive interface is engaged and so forth. 
Subsequently, the minimum of these two costs (Line $19$) specifies the demand allocation (Lines $19-22$).
Finally, after every demand has been served, the activation cost is added to the $totalCost$, so that this variable stores the actual total cost of the resulting allocations (Line $23$).
 
\begin{algorithm}
	\caption{ \algoRandInit } \label{alg:randInit}
	\textbf{Input:}{~Services demands \textit{d}, interfaces utilization costs. \textit{c}, activation costs \textit{F}, and interfaces capacities \textit{b}.}\\
	\textbf{Output:}{~Services allocation to minimize utilization and activation cost of interfaces, while all services demands are satisfied.}\\
	
	\begin{algorithmic}[1]
		\For {$k = 1 \dots K $}
		\State {$ m = \max\limits_{j \in \mathcal{J}} d_{jk}$;}
		\For {$j = 1 \dotsc J$}
		\State { $ d_{jk}^{'} = d_{jk} / m $;}
		\EndFor	
		\EndFor
		\State  {$ d_s = $ {RANDOM-INIT-EQUAL-SHARES}$(d')$;}
		
		\State {$ totalCost = 0 $;}
		\State {$ A = {0}_{IXJ} $;}
		
		\For {$s:=1 \dots J*K$}
		
		\State {Let $d_{jk}$ be the demand that corresponds to $d_s$.}
		\State {$ i^{'} = \min\limits_{i\in\mathcal{I}} c_{ik}$;}
		\State {$ i^{''} = \min\limits_{i\in\mathcal{I},i \neq i^{'}}c_{ik}$;}
		\If {$ d_{jk} \leq b_{(i^{'})k }$}
		\State { $[totalCost, A ] = $ALLOCATE$(d_{jk},~ b_{(i^{'})k });$ }
		\ElsIf{$ d_{jk} \leq b_{(i^{''})k}$}
		\State { $[totalCost, A ] = $ALLOCATE$(d_{jk},~ b_{(i^{''})k });$ }
		\Else
		\State {$[nonSplitCost, l] = $ }
		\Statex {\hspace*{\mylen} {NON-SPLIT-COST} $(d_{jk}, b, c, F, A);$}
		\State { $[splitCost, l^{'}] = ${SPLIT-COST}$( d_{jk}, b, c, F, A);$}
		
		\If { $ nonSplitCost \leq splitCost $}			
		\State { $[totalCost, A ] = $ALLOCATE$(d_{jk},~ b_{lk});$ }
		\Else
		\State { $[totalCost, A ] = $ALLOCATE$(d_{jk},~ b_{ (l^{'})k });$ }
		\EndIf
		\EndIf
	\EndFor	

	\State {$ totalCost = totalCost + \sum_{i \in \mathcal{I}} \sum_{j \in \mathcal{J}} F_i A_{ij}$;}

	\end{algorithmic}
\end{algorithm}

To evaluate the effect of the randomized selection of Lines $1-5$, we also introduced a sophisticated ordering taking into account the average cost of the requested demands in deciding which resource demand to serve first.
Algorithm \ref{alg:avgCost} presents this variant, which we named\algoAvgCost.

The new algorithm differs in the order it serves the requested resource demands.
After performing demand normalization as\algoRandInit (Lines $1-4$), the\algoAvgCost calculates the average cost of each resource type by producing the dot product of (column) vectors $\bf{c_k}$ and $\bf{b_k}$, which represent $k$-th resource's utilization cost (per unit) and capacities over all interfaces respectively (Lines $5-7$).
Afterwards, element $(j,k)$ of matrix $C$ will hold the average cost of resource $k$; therefore for a fixed $k$, the value of $C_{jk}$ is the same for each $j$.

Line $8$ produces the element-wise (Hadamard) product of matrices $\bf{d^{'}}$ and $C$.
The resulting elements $d_{jk}$ of matrix $\bf{d}$ will hold the average cost of the demanded resource $k$ of service $j$. As a consequence, we can use the information of matrix $\bf{d}$ to infer the most demanding resources. This is done by reshaping the latter $2D$ matrix to a vector $\bf{d_{s}}$ and sorting it by descending order (Line $9$).

From there, the remaining part of the algorithm is the same as\algoRandInit, which sequentially processes $\bf{d_{s}}$ elements and allocates them to the interfaces \vspace{-1mm}(Lines $6-23$ of Algorithm \ref{alg:randInit}).

\begin{algorithm}
	\caption{ \algoAvgCost } \label{alg:avgCost}
	\textbf{Input:}{~Services demands \textit{d}, interfaces utilization costs. \textit{c}, activation costs \textit{F}, and interfaces capacities \textit{b}.}\\
	\textbf{Output:}{~Services allocation to minimize utilization and activation cost of interfaces, while all services demands are satisfied.}\\
	
	\begin{algorithmic}[1]
		
		%		 Demands Normalization
		\For {$k = 1 \dots K $}
		\State {$ m = \max\limits_{j \in \mathcal{J}} d_{jk}$;}
		\For {$j = 1 \dotsc J$}
		\State { $ d_{jk}^{'} = d_{jk} / m $;}
		\EndFor	
		\EndFor 
		% Average Cost Calculation
		\For {$k = 1 \dots K$}
		\For {$j = 1 \dots J$}
		\State {$ C_{jk} = (c_k \cdot b_k )/100 $;}
		\EndFor	
		\EndFor	
		
		\State {$ d = d^{'} \circ C $;}
		\State {Reshape $d$ to vector $d_s$ and sort it by descending order.}
		\State {Same as Lines $6-23$ of Algorithm \ref{alg:randInit}.}
	\end{algorithmic}
\end{algorithm}

Regarding the required computational steps, the \algoRandInit algorithm needs $2|\CAL{K}||\CAL{J}|$ steps to calculate the resource shares (Lines $1-4$), where $|.|$ is the cardinality of the included set. RANDOM-INIT-EQUAL-SHARES (Line $5$) requires at most $|\CAL{K}||\CAL{J}|log(|\CAL{K}||\CAL{J}|)$. 
Line $7$ needs $|\CAL{I}||\CAL{J}|$ steps and lines $9-22$ require $2|\CAL{I}|+2+|\CAL{I}|+|\CAL{I}|+1$ steps, since lines 10 and 11 require $|\CAL{I}|$ steps each, {ALLOCATE} is constant (1 step), and both {NON-SPLIT-COST} and {SPLIT-COST} require at most $|\CAL{I}|$ steps. Finally, the summation in line 23 requires $|\CAL{I}||\CAL{J}|$ steps.  
Therefore, lines $6-23$, which are common to both algorithms, require at most $|\CAL{I}||\CAL{J}| + |\CAL{J}||\CAL{K}|(4|\CAL{I}|+3) + |\CAL{I}||\CAL{J}| $ steps.

Lines $5-7$ of the AVERAGE-COST-ALLOCATION algorithm require $|\CAL{K}||\CAL{J}||\CAL{I}|$ steps, and line $8$ of the same algorithm needs $|\CAL{J}||\CAL{K}|$ steps to calculate the element-wise product. Moreover, line $9$ requires  $|\CAL{K}||\CAL{J}|(1+log(|\CAL{K}||\CAL{J}|))$ steps to reshape and sort the given matrix.

As a result, the first algorithm requires at most $|\CAL{K}||\CAL{J}|(4|\CAL{I}| + log(|\CAL{K}||\CAL{J}|) + 5) + 2|\CAL{I}||\CAL{J}|$ steps in total, whereas the second one requires at most $|\CAL{K}||\CAL{J}|(5|\CAL{I}| + log(|\CAL{K}||\CAL{J}|) + 7) + 2|\CAL{I}||\CAL{J}|$ steps to terminate. To conclude, both algorithmns are $\mathcal{O}(|\CAL{I}||\CAL{K}||\CAL{J}|+ |\CAL{K}||\CAL{J}|log(|\CAL{K}||\CAL{J}|))$.

\section{Allocation over Multiple Rounds}\label{sec:SchedulingSection}

In this section, we lift the assumption of feasibility in a single round. If the allocation of all demands cannot take place in a single shot (or round), we consider that we will utilize the same interface capacities for more than one round to handle the remaining resource demands. Therefore, in the first round we make an incomplete allocation (that will not serve all demands) and when these demands have been served we again employ the same interfaces' resources for a new round to serve the remaining demands repeating this process for as many rounds as necessary. 

This simple solution can be implemented by introducing an integer $R>1$ that will be the number of rounds to serve all demands with the available capacities. We call this problem \textit{multi-round SIA} and the difference from \textit{SIA} is only in (\ref{eq:capacityConstraints}), which now becomes: 

\begin{equation}
\sum\limits_{\mathnormal{j} \in \mathcal{J}} (1+a_{ijk})x_{ijk} \leq Rb_{ik}, ~\forall \mathnormal{i} \in \mathcal{I}, ~\forall \mathnormal{k} \in \mathcal{K}.
\label{eq:SchedulingCapacityConstraints}
\end{equation}\vspace{-3mm}

The selection of $R$ depends on the system designer's goal. The number of rounds are affected by the following two factors: user flexibility versus cost of the solution, and user flexibility versus duration of the solution (i.e., number of rounds). Specifically, if a designer is insensitive to the number of rounds, then in each round the lowest-cost resource will be allocated. Conversely, if the completion time is of utter importance, then fewer rounds are necessary, but the resulting cost is likely to be higher.

\subsection{Upper and lower bounds for the required number of rounds}
The minimum number of rounds, $R_{min}$, to achieve (resource demands) feasibility is given by the lower bound needed to allocate the whole set of demands. If only one round is enough, the model reduces to that of Section \ref{sec:SystemModel}. 
In order to calculate the lower bound, we choose $R$ such that the minimum possible number of rounds to allocate the whole set of demands is used. 
We find how many rounds are required to fully serve each resource type's demands and choose the maximum of these rounds. 
Therefore, $R_{min} = \max\limits_{k \in \mathcal{K}} \left \lceil \frac{D_k}{B_k} \right \rceil$, where $D_k \triangleq \sum\limits_{\mathnormal{j} \in \mathcal{J}} d_{jk}$ are the total service demands for resource $k$, $B_k \triangleq \sum\limits_{\mathnormal{i} \in \mathcal{I}} b_{ik}$  are the total capacities of resource $k$, and $\lceil.\rceil$ is used to denote the ceiling function. 
Using fewer rounds than $R_{min}$ is not enough to satisfy every demand constraint and hence the problem becomes infeasible. Lower interface capacities yield more rounds to serve the requested resource demands.

If the system designer is interested in the lowest possible total allocation cost, only the lowest-cost interfaces should be used in each round. This allocation policy clearly results in a lower total cost in comparison to the aforementioned one. However, more rounds may be used. 
The necessary number of rounds for this allocation policy, $R_{max}$, is given by choosing the number of rounds such that in each round we utilize only the lowest-cost resource and wait until it again becomes available.

Thus: $R_{max} = \max\limits_{k \in \mathcal{K}} \left \lceil \frac{D_k}{  b_{(i_{k}^{'})k} } \right \rceil$, where $i_{k}^{'} \triangleq \underset{\mathnormal{i} \in \mathcal{I}}{\arg\min}(c_{ik}D_k+F_i) $.

More than $R_{max}$ number of rounds can be used, but the total allocation cost will not be further decreased, since using this policy already uses all the available lowest-cost resources in each round.
  
The previous analysis leads to the following claim.

\begin{proposition} The number of rounds $R$ for the multi-round SIA satisfies: 
\begin{equation}  R_{min} \leq R \leq R_{max} \label{in:R} \end{equation}
\end{proposition}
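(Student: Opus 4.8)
The plan is to prove the two inequalities separately, because they arise from entirely different considerations: the lower bound is a hard feasibility (counting) requirement on any valid allocation, whereas the upper bound characterizes the point past which adding rounds ceases to be useful.

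For the lower bound $R \geq R_{min}$, I would fix a resource type $k$ and sum the multi-round capacity constraint \eqref{eq:SchedulingCapacityConstraints} over all interfaces $i \in \mathcal{I}$. Since $a_{ijk}\geq 0$ and $x_{ijk}\geq 0$, the left-hand side is at least $\sum_{i}\sum_{j} x_{ijk}$, and by the demand constraints \eqref{eq:demandsConstraints} this equals $\sum_j d_{jk} = D_k$; the right-hand side is $R\sum_i b_{ik} = R B_k$. Hence $D_k \leq R B_k$, i.e. $R \geq D_k/B_k$, and because $R$ is an integer, $R\geq\lceil D_k/B_k\rceil$. Taking the maximum over $k\in\mathcal{K}$ yields $R\geq R_{min}$. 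This step is short and fully rigorous; note it holds irrespective of the overhead, which can only tighten the requirement.

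For the upper bound $R\leq R_{max}$, I would argue that $R_{max}$ rounds already suffice to realize the cost-minimizing allocation, so there is no incentive to use more. The relevant policy routes, for each resource $k$, its entire demand $D_k$ onto the single interface $i'_k = \argmin_i (c_{ik}D_k + F_i)$ that minimizes the combined utilization-plus-activation cost of serving that resource alone. Over $R$ rounds this interface supplies capacity $R\,b_{(i'_k)k}$, so the whole demand fits once $R\,b_{(i'_k)k}\geq D_k$, i.e. once $R\geq\lceil D_k/b_{(i'_k)k}\rceil$; requiring this for every $k$ simultaneously gives exactly $R\geq R_{max}$. Thus at $R=R_{max}$ the cheapest-interface-only allocation becomes feasible, and since any further increase of $R$ merely enlarges per-round capacities without exposing any cheaper interface, the attained cost cannot decrease beyond $R_{max}$. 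Consequently the sensible range for $R$ is capped at $R_{max}$.

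The main obstacle I anticipate is the upper bound, specifically making precise the claim that no allocation can beat the cheapest-interface-per-resource policy. The per-resource capacity and demand constraints decouple across $k$, so the utilization term is minimized resource-by-resource by the least-cost interface; the delicate point is the activation cost, which couples resources through the shared $A_{ij}$ of each service. I would handle this by observing that the definition of $R_{max}$ already folds activation into the choice of $i'_k$ through the $+F_i$ term, so the bound is stated relative to this explicit lowest-cost policy rather than an arbitrary optimum, and the argument then reduces to the clean observation that once each resource's demand fits on its designated cheapest interface, extra rounds are redundant. I would also emphasize that, unlike the lower bound, $R_{max}$ need not be attained with equality---it marks where cost becomes flat---so the proposition asserts containment of the useful range rather than tightness at the upper end.
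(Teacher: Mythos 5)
Your proposal follows essentially the same route as the paper: the lower bound by summing the multi-round capacity constraint \eqref{eq:SchedulingCapacityConstraints} over interfaces and invoking the demand constraints \eqref{eq:demandsConstraints} to get $D_k/B_k \leq R$, and the upper bound by counting the rounds needed when each resource $k$ is served only by its cheapest interface $i'_k$ and observing that extra rounds beyond that cannot reduce cost. Your treatment of the lower bound is in fact slightly cleaner than the paper's (you keep $a_{ijk}\geq 0$ as a one-sided estimate rather than setting it to zero), and your explicit acknowledgment that the upper bound delimits the \emph{useful} range rather than a hard feasibility constraint matches the spirit, if not the letter, of the paper's argument.
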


\begin{proof} 
	
Assume $D_k, B_k,$ and $ i^{'}_k,$ as defined previously.
If \eqref{eq:capacityConstraints} cannot be satisfied, then the \textit{SIA} is not feasible. In this case, a workaround is to allow the allocation to take place in more than one round. In each round all the resources are available for allocation.
Hence, assume there is a positive integer $R_k>1$ such that 
$ b_{ik} < R_kb_{ik}, (\forall i \in \mathcal{I}), (\forall k \in \mathcal{K})$ is true and therefore \eqref{eq:SchedulingCapacityConstraints} holds, which means that the \textit{multi-round SIA} is feasible.
Setting $a_{ijk}=0$ in \eqref{eq:SchedulingCapacityConstraints} and decomposing the $i's$ in the last inequality yields:

\begin{equation*}
\begin{aligned}
 \sum\limits_{\mathnormal{j} \in \mathcal{J}} \sum\limits_{\mathnormal{i} \in \mathcal{I}}x_{ijk} \leq R_k\sum\limits_{\mathnormal{i} \in \mathcal{I}}b_{ik}, (\forall \mathnormal{k} \in \mathcal{K}) ~ \stackrel{\eqref{eq:demandsConstraints}}{\Leftrightarrow}
\end{aligned}
\end{equation*}
\begin{equation}
\begin{aligned}
 \frac{\sum\limits_{\mathnormal{j} \in \mathcal{J}}d_{jk}}{\sum\limits_{\mathnormal{i} \in \mathcal{I}}b_{ik}} \leq R_k, (\forall \mathnormal{k} \in \mathcal{K}) \Leftrightarrow  \frac{D_k}{B_k} \leq R_k, (\forall \mathnormal{k} \in \mathcal{K}).
\end{aligned}\label{in:R_min}
\end{equation}

Letting $R_k = \lceil \frac{D_k}{B_k} \rceil $ is enough to make \eqref{in:R_min} hold. Clearly, the maximum of those $R_{k}$'s will satisfy \eqref{in:R_min} as well. Thus, taking  $R_{min} = \underset{\mathnormal{k} \in \mathcal{K}}{\max}~R_k$ will make \eqref{in:R_min} hold and \textit{multi-round SIA} become feasible.

For the right-hand side of \eqref{in:R}, we will only use the most inexpensive interface for each resource in each round. Index $i^{'}_k$ calculates which interface offers the most inexpensive cost for the total demands of resource $k$. Therefore, only $ b_{(i^{'}_k)k}$ of every resource $k$ will be used in each round. Consequently, $ \lceil \frac{D_k}{b_{(i^{'}_k)k}} \rceil $ rounds are required to serve the whole demands ($D_k$) for resource $k$, using the most inexpensive interface in each round.
From \eqref{in:R_min} and considering the fact that $ b_{(i^{'}_k)k} \leq \sum\limits_{\mathnormal{i} \in \mathcal{I}} b_{ik} = B_k, (\forall \mathnormal{k} \in \mathcal{K})$, we have:

\begin{equation}
\begin{aligned}
\frac{D_k}{B_k} \leq \frac{D_k}{b_{(i^{'}_k)k}} \leq R^{'}_k, (\forall \mathnormal{k} \in \mathcal{K}).
\end{aligned}\label{in:R_max}
\end{equation}

Setting $ R^{'}_k = \lceil \frac{D_k}{b_{(i^{'}_k)k}} \rceil $ makes \eqref{in:R_max} hold i.e., the \textit{multi-round SIA} is feasible. The maximum of $ R^{'}_k$'s, which is $R_{max} = \underset{\mathnormal{k} \in \mathcal{K}}{\max}~R^{'}_k $, makes \eqref{eq:SchedulingCapacityConstraints} true and $ R_{min} \leq R_{max} $.
\end{proof}

Next, we will give an example to clarify the previous concepts.
\subsubsection*{Example}  Consider an IoT device with two interfaces and $D_k, B_k,$ and $ i^{'}_k,$ as previously defined.
The first and the second interface offer $b_{1} = (20, 25)$ and $b_{2} = (25,30)$ units of (\res{1}, \res{2}) respectively. Hence, the total interfaces' capacities for the two resources are: $ (B_{1}, B_{2}) = (45,55)$.
Let service demands be $ (D_1, D_2) = ( \sum\limits_{\mathnormal{j} \in \mathcal{J}}  d_{j1}, \sum\limits_{\mathnormal{j} \in \mathcal{J}}  d_{j2}) = (100, 80)$ units of these resources in total. Obviously, the total service demands cannot be accommodated in one round.

Then, the minimum number of rounds to serve the requested demands are: $ R_{min} = \max ( \lceil \frac{D_1}{B_1} \rceil, \lceil \frac{D_2}{B_2} \rceil ) =  \max ( \lceil \frac{100}{45} \rceil, \lceil \frac{80}{55} \rceil ) = 3$. Note that the calculation of $ R_{min} $ does not take into consideration interfaces' costs.

Now, consider the utilization cost of the first interface for each unit of (\res{1}, \res{2}) to be $c_{1} = (35,45)$. Similarly, the utilization cost per unit of the second interface is $c_{2} = (30,50)$. Additionally, the interfaces activation costs are $F_{1} = 100$ and $F_{2 } = 210 $ for the first and the second interface respectively.

Then, $ i_{1}^{'} = \underset{\mathnormal{i} \in \mathcal{I}}{\arg\min} ~(c_{i1}D_1+F_i) = \underset{\mathnormal{i} \in \mathcal{I}}{\arg\min} ~(3500+100, 3000+210) = \underset{\mathnormal{i} \in \mathcal{I}}{\arg\min} ~(3600,3210)= 2 $ and $ i_{2}^{'} = \underset{\mathnormal{i} \in \mathcal{I}}{\arg\min} ~(c_{i2}D_2+F_i) = \underset{\mathnormal{i} \in \mathcal{I}}{\arg\min} ~( 45*80+100, 50*80+210) = \underset{\mathnormal{i} \in \mathcal{I}}{\arg\min} ~(3700,4210)= 1 $.

Therefore, $ R_{max} = \max ( \lceil \frac{D_1}{b_{(i^{'}_1)1}} \rceil, \lceil \frac{D_2}{b_{(i^{'}_2)2}} \rceil ) = \max ( \lceil \frac{D_1}{b_{21}} \rceil, \lceil \frac{D_2}{b_{12}} \rceil ) = \max ( \lceil \frac{100}{25} \rceil, \lceil \frac{80}{25} \rceil )  = 4 $.

Note that in this example and in general the resulting allocations for $ R = R_{min}$ may leave fewer leftover capacities in comparison to $ R = R_{max}$, but produce a higher total cost since many demands are forced to be served from more expensive interfaces in order to keep $R$ as low as possible.

\section{Results}\label{sec:Results}

In this section, we first present the simulation results for the system model we described in Section \ref{sec:SystemModel}. Recall that the system model in that section accounts for the \textit{SIA} in one round. Scenarios with different sets of services and activation costs were simulated to comprehend the behavior of the system under various circumstances. These scenarios act as benchmarks to evaluate the performance of the algorithms we presented in Section \ref{sec:Algo_Solution}. The corresponding results can be found in subsection \ref{subSec:Algo_Solutions}. In the next and last subsection, we present the simulation results of \textit{multi-round SIA}. Therein, we devise a new set of simulation configurations (services and activation costs) to demonstrate the effect of the number of rounds to the total cost of the problem.

\subsection{One Round Allocation}
We performed several sets of simulations to assess the total cost and the number of splits per service for configurations of three to ten services using different interface activation costs, and services' demands.
Note that in this case \textit{SIA} is always taken to be feasible i.e., the whole set of demands is satisfied in one allocation round ($\sum\limits_{\mathnormal{j} \in \mathcal{J}} \sum\limits_{\mathnormal{k} \in \mathcal{K}} d_{jk} \leq \sum\limits_{\mathnormal{i} \in \mathcal{I}} \sum\limits_{\mathnormal{k} \in \mathcal{K}} b_{ik}, ~\forall \mathnormal{k} \in \mathcal{K} $). 
When this assumption doesn't hold, the problem will entail service allocation over multiple rounds (i.e., \eqref{eq:SchedulingCapacityConstraints} with $R>1$). Such cases are considered in Section \ref{sec:SchedulingResults}. 

Interfaces' utilization costs ($c_{ik}$'s) have been set to be constant throughout the experiments and chosen such that they are not uniform among interfaces to model practical scenarios. Additionally, the activation costs of the interfaces ($F_i$'s) were tuned in order to reflect the effect they may have on forcing the services to split among several interfaces. 

Five sets of simulation setups were considered. Using a set of three different demand classes, we produced a set of services, which we call \textit{Random Services}, by choosing randomly from one of these classes with equal probability. 
We combined the \textit{Random Services} set with three different values of activation costs to mimic actual scenarios: a \textit{High (RSH)}, a \textit{Mixed (RSM)} and a \textit{Low (RSL)} activation cost. The first two values were chosen to be an order of magnitude higher per resource type than the last one. We ran the solver 1000 times for each different \textit{Random} configuration and averaged to obtain the optimal total cost and the average number of splits per service. Two more service sets consisted of \textit{High (HDL)} and \textit{Low (LDL)} demands services along with \textit{Low} activation cost. \tablename{~\ref{table:Simulation_Services}} summarizes the setups we tested.

More services produce a total cost with a wide spread around the mean, since more \textit{High (RSH)} and \textit{Low (RSL)} services lead to additional corresponding utilization and activation costs to the total sum.  As a result, the total cost may vary significantly depending on the arrival and the heterogeneity of the services.
The plots of \figurename{~\ref{fig:Optimal_Cost_Random}} presents the total cost concerning the \textit{Random Services (RSH, RSM, and RSL)} demands scenario (for \textit{High}, \textit{Mixed}, and \textit{Low} activation cost respectively). The box-plots show the linearity of the total cost in relation to the number of services and the spread between the higher and lower value of the 1000 runs. 

The maximum total cost of \textit{Random} services is higher than \textit{HDL} services due to the higher demands of the demand	 classes from which \textit{Random} services were produced. 
In \figurename{~\ref{fig:Optimal_Cost_HighAndLow}} the total cost for \textit{RSL}, \textit{HDL}, and \textit{LDL} services is provided. The linearity of the cost is evident. Moreover, comparing to the previous figure, if the activation cost is much higher than the utilization cost, then the total cost is higher for \textit{RSL} services than for \textit{HDL} services. The average total cost for \textit{RSL} services lies between the total cost of \textit{HDL} and \textit{LDL} demands services.

\begin{table}[!t]
	\renewcommand{\arraystretch}{1.1}
	\caption{Services tested in our Experiments}
	\centering
	\begin{tabular}{ | l | l | }
		\hline
		\bfseries Services Configuration  & \bfseries Activation Cost\\
		\hline
		
		Random Services, High Activation Cost (\textit{RSH})& $F_{i}= [500, 500, 500]$ \\
		Random Services, Mixed Activation Cost (\textit{RSM})& $F_{i}= [300, 100, 200]$ \\
		Random Services, Low Activation Cost (\textit{RSL})&  $F_{i} = [20, 20, 20]$ \\
		High Demands, Low Activation Cost (\textit{HDL})& $F_{i} = [20, 20, 20]$ \\
		Low Demands, Low Activation Cost (\textit{LDL})& $F_{i} = [20, 20, 20]$ \\
		
		\hline
	\end{tabular} \label{table:Simulation_Services}
	
\end{table}

\begin{figure}
	\centering
	\includegraphics[width=1\columnwidth]{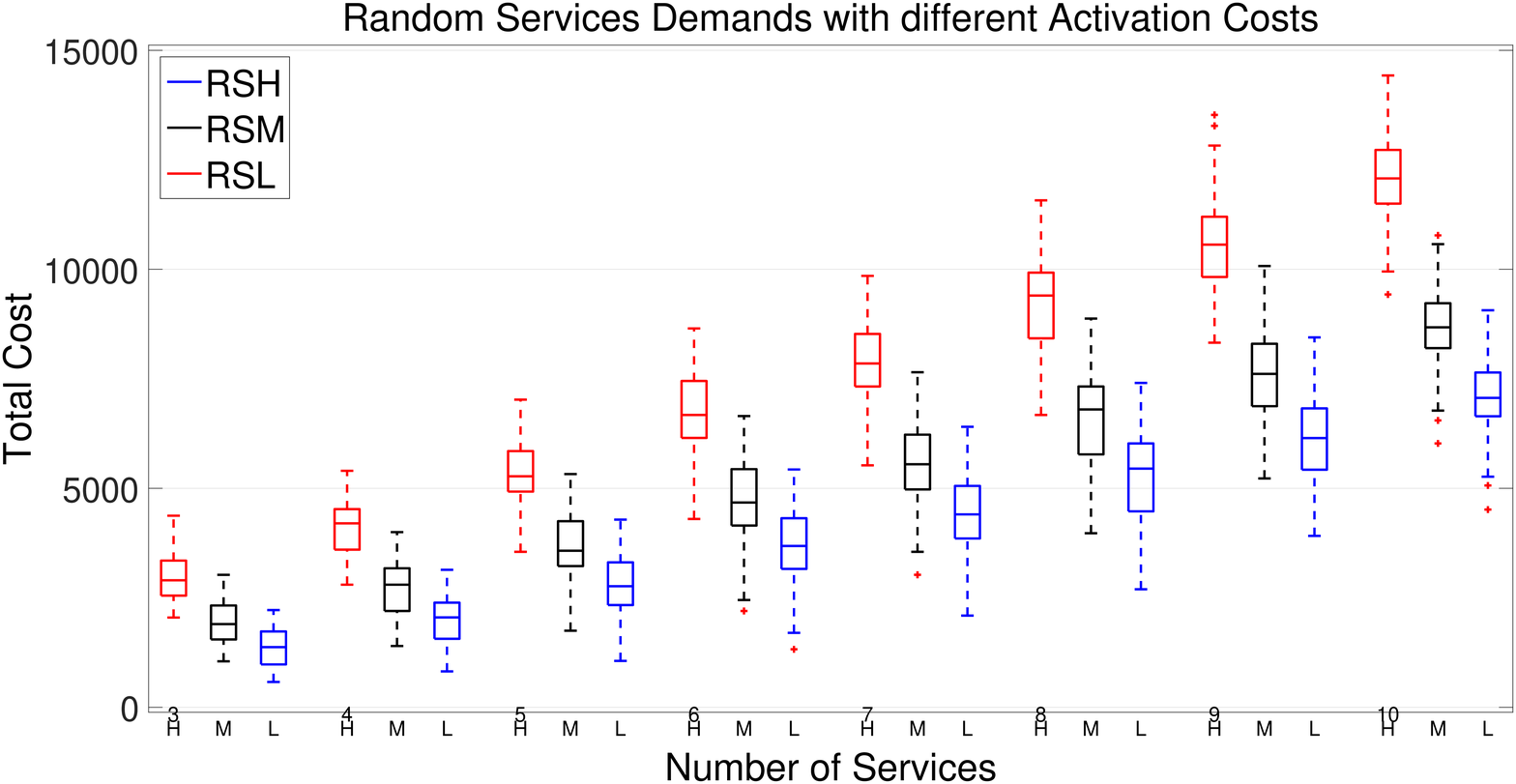}
	\caption{Total Cost vs Number of \textit{Random} Services. The figure shows the box-plots of the total allocation cost for a set of three to ten \textit{Random Services}. The reflection on the cost for three different values of activation cost is depicted. \textit{High }(red), \textit{Mixed} (black), and \textit{Low} (blue) activation cost is denoted with H, M, and L respectively.}
	\label{fig:Optimal_Cost_Random}
\end{figure}

\begin{figure}
	\centering
	\includegraphics[width=1\columnwidth]{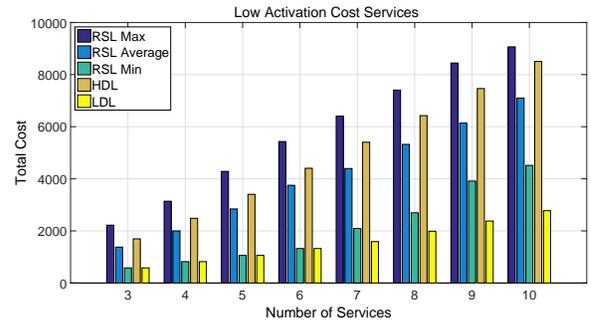}
	\caption{Total Cost vs Number of \textit{Random}, \textit{High}, and \textit{Low} Services. The figure shows the total allocation cost of three to ten services with interfaces having \textit{Low} activation cost. Maximum, Average, and Minimum total costs for \textit{Random Services} are depicted. }
	\label{fig:Optimal_Cost_HighAndLow}
\end{figure}

In \figurename{~\ref{fig:Splits_Optimal}} splits per service are presented for the same sets of services and activation costs we used previously (\tablename{ \ref{table:Simulation_Services}}). When no services are split, the number of splits per service is one.
Overall, when interfaces have a \textit{High} activation (in comparison to the utilization or lower activation) cost, splitting is not advantageous, since engaging additional interfaces results in paying a much higher total cost. An example can be seen in \figurename{~\ref{fig:Splits_Optimal}} where \textit{Random Services} with \textit{High}  or \textit{Mixed} activation cost, which is one order of magnitude higher than \textit{Low} activation cost, do not split on average. 

Conversely, keeping the activation cost \textit{Low} (compared to the utilization one) yields more splits per service --- especially when the utilization cost per unit is low. For instance, consider \textit{Random Services} with \textit{Low} activation cost. Such services split more to derive a benefit from interfaces with lower utilization cost per unit. Obviously, these interfaces' exploitation would not be beneficial to the total cost, if a higher activation cost was charged as well.

Furthermore, splits per service do not demonstrate a monotonic behavior in the number of services, assuming all used parameters are fixed except for the number of services. Consider, for instance, \textit{High} or \textit{Low} demand services in the same figure. 
When a relatively low number of \textit{High} demand services are used, more splits happen to exploit low utilization cost interfaces along with \textit{Low} activation cost. However, when more \textit{High} demand services are added and low utilization cost interfaces are depleted, then fewer splits per service occur. This behavior is attributed to the fact that low cost capacities are no more available and hence it is disadvantageous to split resources to interfaces with higher utilization cost (and be charged the corresponding activation cost as well). 

\begin{figure}
	\centering
	\includegraphics[width=1\columnwidth]{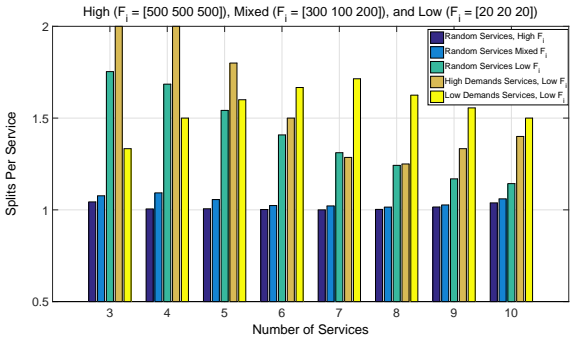}
	\caption{Splits per Service vs Number of Services for sets of services and activation costs of \tablename{~\ref{table:Simulation_Services}}.}
	\label{fig:Splits_Optimal}
\end{figure}

\subsection{Algorithms' Performance}\label{subSec:Algo_Solutions}
The results of \figurename{~\ref{fig:Costs_Algos} reveal that when the activation cost is low ($F_{i} = [20, 20, 20]$) compared with the utilization cost (top plots), the proposed algorithms closely approximate the optimal solution.
On the other hand, when the activation cost becomes one order of magnitude higher, as in the bottom plots of the figure, the algorithms' approximation is not as good.

This behavior can be attributed to the algorithms' allocation policy. Both algorithms initially attempt to fit the chosen demand to the two interfaces that are the least expensive ones, in terms of the utilization cost (see Lines 16-19 of Algorithm \ref{alg:avgCost}). Namely, if there are enough capacities on one of them, the algorithms will not take into consideration the corresponding activation cost. Hence, in case the chosen interface is not yet activated, a substantial extra cost may be charged. 

Note that in this case the optimal allocation may be another one: a service split may have proven more beneficial though the algorithms do not consider it. There is a significant computational advantage of this allocation policy, however. 
The algorithms do not search exhaustively for the best (i.e., lowest-cost) combination of allocations; two checks are enough on average. 

Plots in \figurename{~\ref{fig:Costs_Algos} also present the fact that neither of the two algorithms clearly outperforms the other. The \algoRandInit algorithm approaches the optimal cost better when relatively high activation cost interfaces are operating (bottom figures). Therefore, from an implementation perspective, the algorithm with the faster initialization should be preferred (recall that the two methods differ only in that step --- not the allocation policy they use).
		
\begin{figure*}[!htbp]
	\centering
	\subfloat{\includegraphics[width=.5\textwidth]{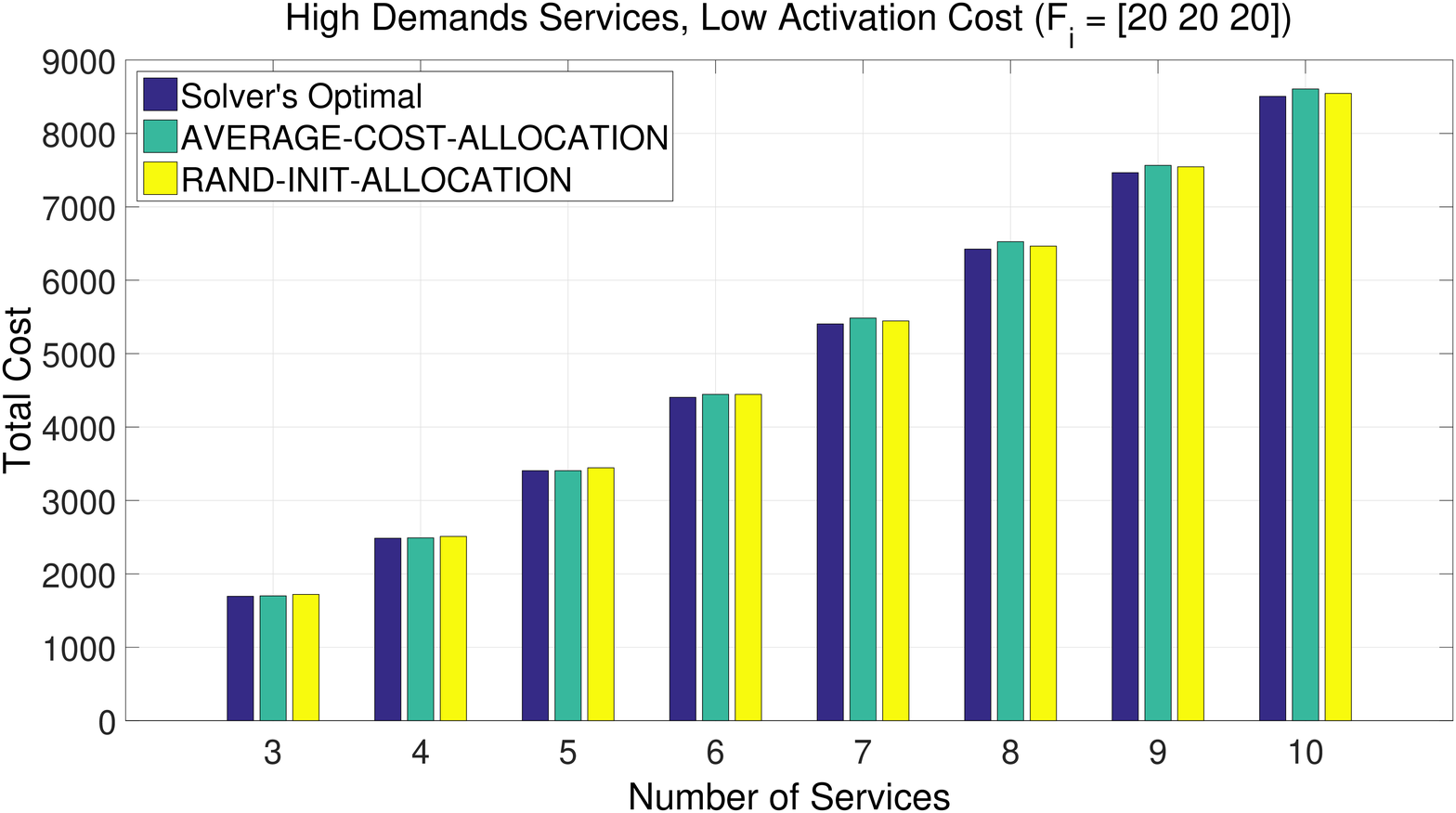}} 	\subfloat{\includegraphics[width=.5\textwidth]{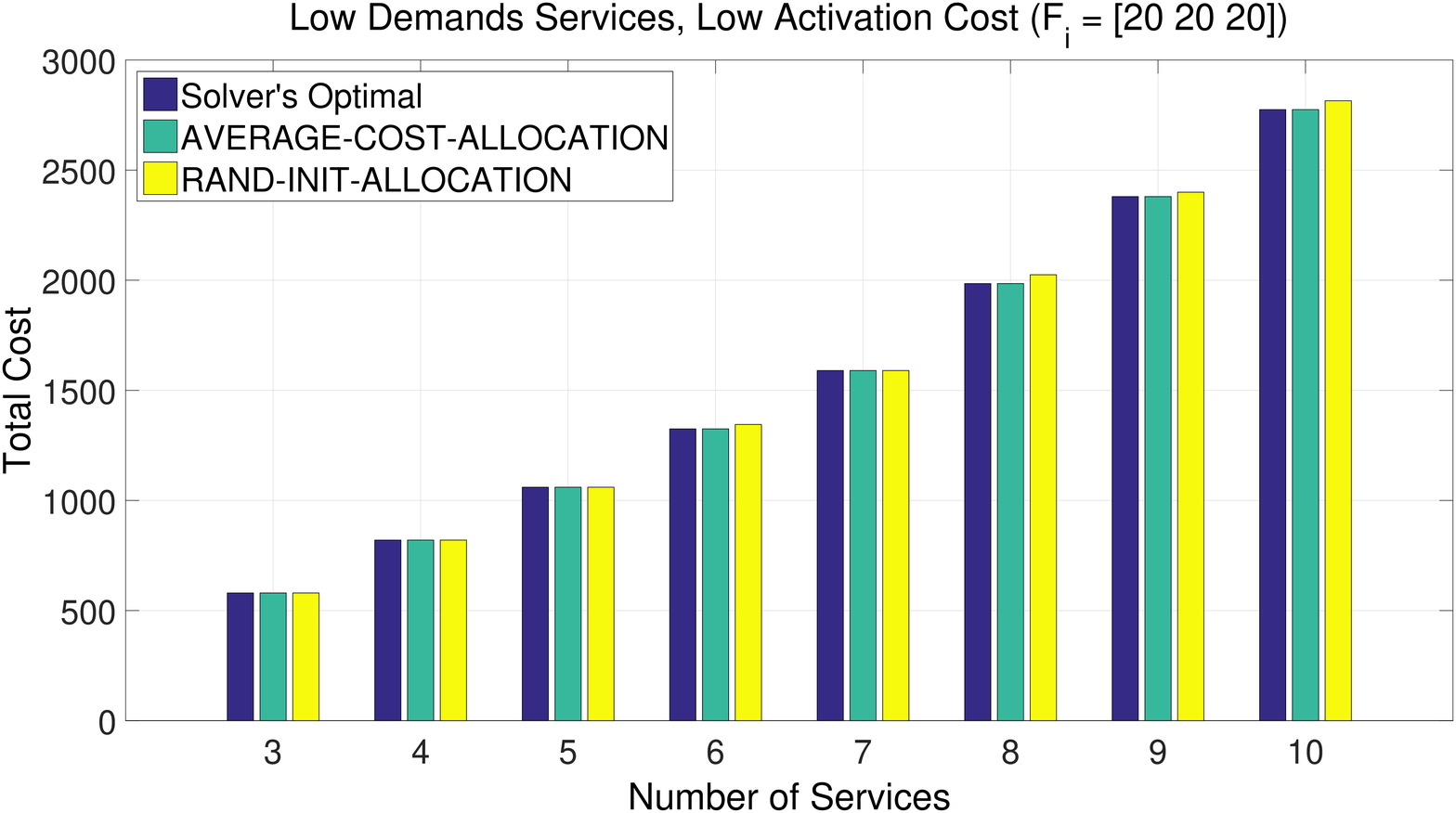}}\\	
	\subfloat{\includegraphics[width=.5\textwidth]{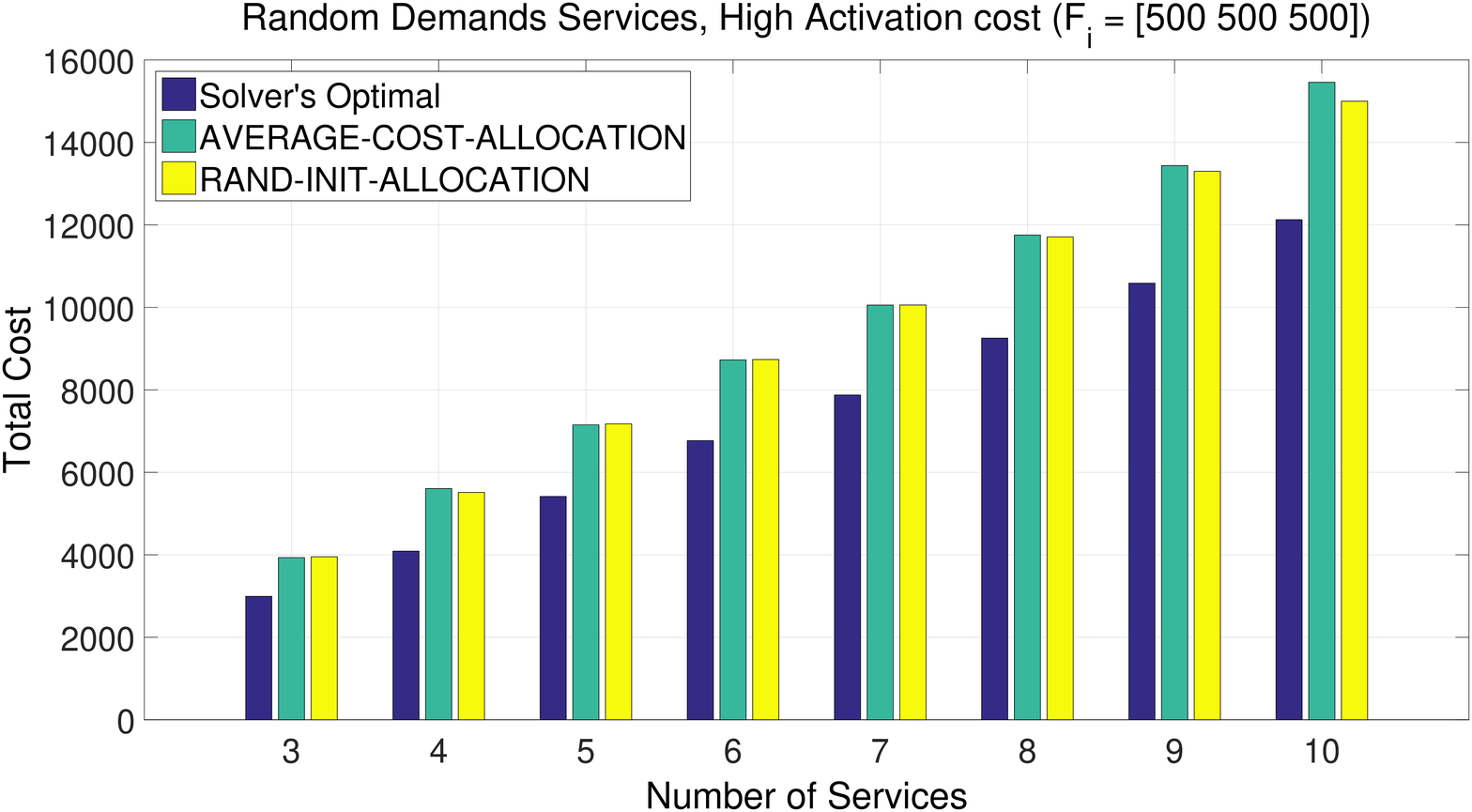}} 	
	\subfloat{\includegraphics[width=.5\textwidth]{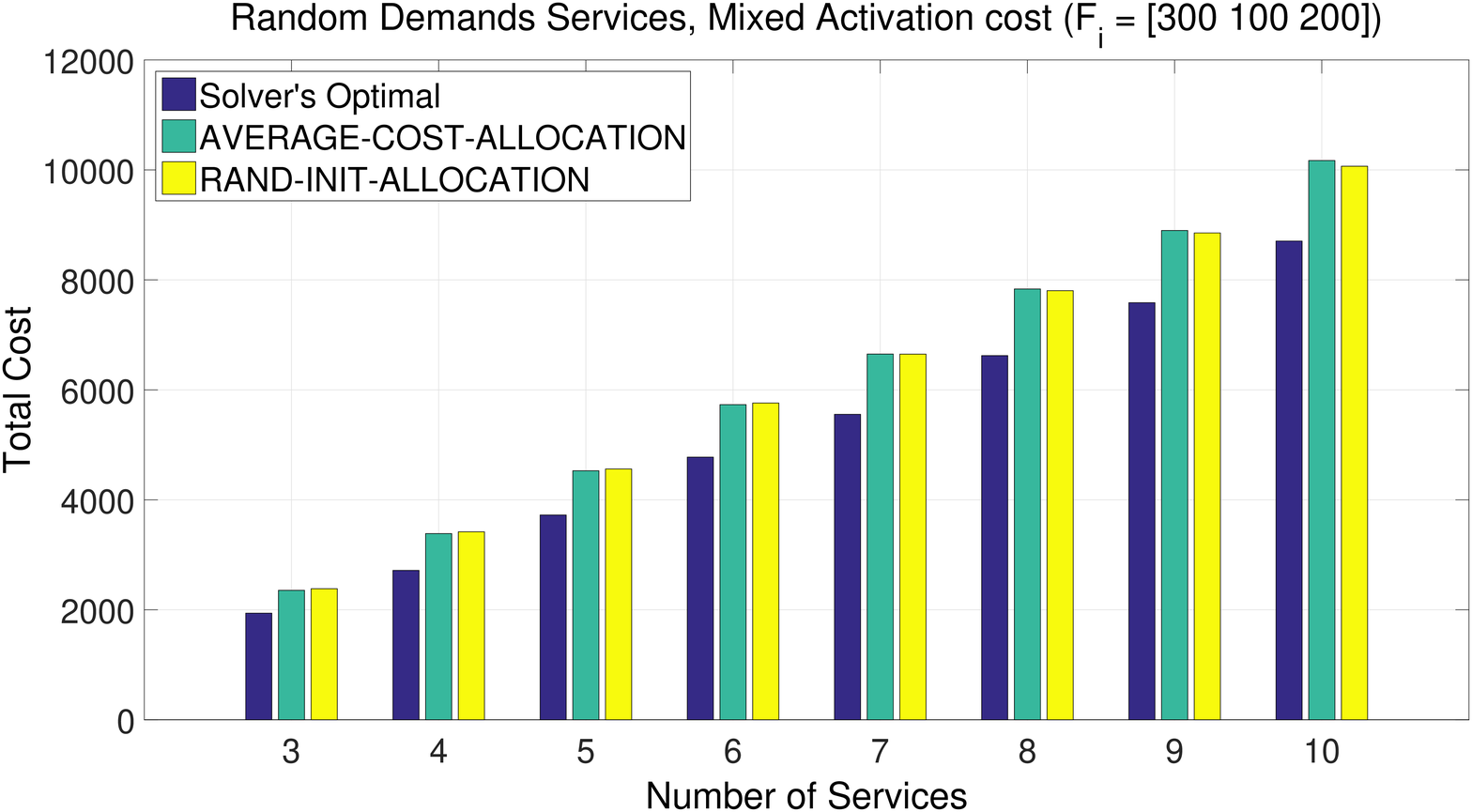}} 

	\caption{ Total Cost vs Number of Services for the algorithms we developed to approximate the optimal total cost: (i) \algoAvgCost begins allocating resources according to their average cost first, and (ii) \algoRandInit which begins allocating resources choosing randomly which service demands to allocate first.}
	\label{fig:Costs_Algos}
\end{figure*}

\subsection{Allocation over Multiple Rounds}\label{sec:SchedulingResults}

We performed several sets of simulations in Matlab to gain insight into the multi-round solution we discussed through the \textit{multi-round SIA} problem formulation. We performed the following setup:
we considered an IoT device of two interfaces with ten units of \res{1} and eight units of \res{2} each. The per unit utilization cost for the first interface was $c_{1} = (22,20)$ for  (\res{1}, \res{2}) respectively. The corresponding per unit cost for the second interface was set at $c_{2} = (20,8)$. Additionally, the activation costs of the two interfaces were $F_{1} = 100$ and $F_{2} = 110$.

We configured our simulator to allocate three, six, and nine services requiring different resource demands of great heterogeneity in a random manner. For example, one service's dominant share may be \res{1}, another's dominant share may be \res{2}, whilst a third one may demand the same share of both resources. 

We varied the number of rounds $R$ to gain insight into the cost sensitivity. The results can be found in \figurename~\ref{fig:Scheduling_costs}.
The necessity for more than one round is evident. For instance, the demands of nine services need at least five rounds to be fully served (hence in this case $R_{min} = 5$).

As expected, as we increase the number of rounds over which SIA takes place, the total cost is decreased. The maximum total cost for a specific configuration is incurred when $R_{min}$ is used. As we explained, this is anticipated since by increasing $R$, lower cost resources become available and hence used. As a result, the total cost is decreased as we add more rounds.

Furthermore, using more than $R_{max}$ rounds is not beneficial, since we have already exploited the lowest-cost interface for each resource in each round. For example, it is clear in \figurename~\ref{fig:Scheduling_costs} that it makes no sense using four or more rounds to serve three services; no decrease in cost will be observed. 

On the contrary, if we aim to reduce the total leftover capacities to exploit the interfaces' available resources and serve the demands in fewer rounds, we would rather use the smallest possible number of rounds ($R_{min}$). In this case we increase resource utilization in each round and the total cost becomes the highest possible at the same time.

\begin{figure}[!t]
\centering
\includegraphics[width=1\columnwidth]{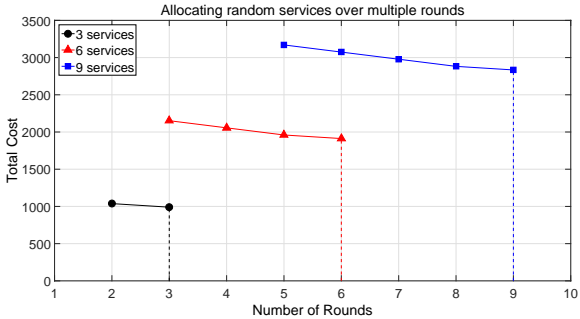}
\caption{Total Cost versus Number of Rounds. An example of three, six and nine services being allocated over different number of rounds. For three services (black line) the minimum and the maximum number of required rounds are $R_{min} = 2$ and $R_{max} = 3$ respectively. For six services (red line) $R_{min} = 3$ and $R_{max} = 6$, and for nine services (blue line) $R_{min} = 5$ and $R_{max} = 9$. Note the monotonicity of the total cost; more rounds yield lower total allocation cost.}
\label{fig:Scheduling_costs}
\end{figure}

\section{Conclusion}\label{sec:Conclusion}

We have introduced a solution to the problem of assigning services with heterogeneous and non-interchangeable resource demands to multiple network interfaces of an Internet of Things (IoT) device, while simultaneously minimizing the cost of using the interfaces. The total cost consists of the utilization cost that is charged for each served resource unit as well as the activation cost of each interface that is the cost of engaging an interface to serve a resource demand. We call this the \textit{Service-to-Interface Assignment (SIA)} problem. 

The solution we suggest is a precise mathematical formulation, which we have proved is NP-Complete. 
We have devised two \textit{SIA} versions. In the first one, the interfaces' available resources can serve the whole set of demands in one round.
We find the solver's optimal solution to the proposed formulation, which acts as a benchmark for the two algorithms we developed and presented to approximate the optimal solution. We have evaluated the proposed algorithms and shown under which circumstances they can approximate the optimal solution well. 
In the second \textit{SIA} version, when the resource demands exceed the IoT device's available resources in one round, we suggest formulating \textit{SIA} in multiple rounds. We use an appropriate number of rounds to optimize the resource allocations, while obtaining the minimum total cost of using the IoT device's interfaces at the same time. We call this the \textit{multi-round SIA} problem.

The numerical results show the role of the activation cost in the services' splits and distribution among the interfaces. Therefore, they can act as a guide for the design and implementation of real IoT applications and parameters e.g., to simulate the power drain of a battery-operated IoT device or change an applications' scheduling policy on-the-fly. A further contribution can be found in the \textit{multi-round SIA}'s results which demonstrate the effect of the number of rounds on the total cost, depending on the used policy. The difference of the cost between the policies of the two bounds (minimum rounds vs minimum cost) is more prominent when the number of services is increased. 

\bibliographystyle{ieeetr}
\bibliography{IoT_Journal_camera_ready}

\end{document}